\newtheoremstyle{plain}
  {0pt} 
  {0pt} 
  {\itshape} 
  {} 
  {\bfseries} 
  {.} 
  {.5em} 
  {} 
\newtheorem{thm}{Theorem}[section]
\newtheorem{defn}{Definition}[section]
\newtheorem{prob}{Problem}[section]
\newcommand\copyrighttext{
    \footnotesize \copyright 2023 IEEE. Personal use of this material is permitted. Permission from IEEE must be obtained for all other uses, in any current or future media, including reprinting/republishing this material for advertising or promotional purposes, creating new collective works, for resale or redistribution to servers or lists, or reuse of any copyrighted component of this work in other works.}
\newcommand\copyrightnotice{%
\begin{tikzpicture}[remember picture,overlay]
\node[anchor=south,yshift=5pt] at (current page.south) {\fbox{\parbox{\dimexpr\textwidth-\fboxsep-\fboxrule\relax}{\copyrighttext}}};
\end{tikzpicture}}
\title{\LARGE \bf
Safety verification of Neural-Network-based controllers:\\ a set invariance approach
}
\author{Louis Jouret, Adnane Saoud and Sorin Olaru
\thanks{This work is supported by the ANR PIA funding: ANR-20-IDEES-0002.}
\thanks{Louis Jouret is with Swiss Federal School of Technology in Lausanne-EPFL, Switzerland, {\tt\small louis.jouret@epfl.ch}}
\thanks{Adnane Saoud is with the College of Computing, University Mohammed VI Polytechnic, Benguerir, Morocco {\tt\small adnane.saoud@um6p.ma}}%
\thanks{Sorin Olaru is with CentraleSupelec, University Paris-Saclay, Gif-sur-Yvette,  France, {\tt\small sorin.olaru@centralesupelec.fr}%
}
\thanks{Digital Object Identifier (DOI): 10.1109/LCSYS.2023.3342088}}
\begin{document}
\title{Safety verification of Neural-Network-based controllers: a set invariance approach}

\maketitle

\begin{abstract}
This paper presents a novel approach to ensure the safety of continuous-time linear dynamical systems controlled by a neural network (NN) based state-feedback. Our method capitalizes on the use of continuous piece-wise affine (PWA) activation functions (e.g. ReLU) which render the NN a PWA continuous function. By computing the affine regions of the latter and applying Nagumo's theorem, a subset of boundary points can effectively verify the invariance of a potentially non-convex set. Consequently, an algorithm that partitions the state space in affine regions is proposed. The scalability of our approach is thoroughly analyzed, and extensive tests are conducted to validate its effectiveness.
\end{abstract}
\copyrightnotice

\section{Introduction}
Machine learning, particularly NNs, has had a transformative impact on various scientific fields, including control systems. Two main approaches have emerged: the first involves approximating a complex control law such as Model Predictive Control (MPC) using a NN \cite{MADDALENA202011362}, making the controller more memory efficient and enabling faster computations. The second approach entails synthesizing a NN controller through reinforcement learning, which has gained popularity due to its capability to learn intricate control strategies from data generated by the system \cite{lillicrap2015continuous}. However, when applying a NN to safety-critical systems \cite{HUNT19921083}, there is often skepticism and valid concerns regarding their black box nature and the inherent difficulty in interpreting their behavior.
In control systems, safety refers to the property of a system remaining in a set of safe states for all future time instances. NNs, characterized by their numerous neurons and nonlinear activation functions, present computational challenges in explicitly representing the input-output relationship. This complexity hampers the ability to interpret the actions taken by a NN-controller and verify its safe operation. Extensive research has been conducted to address this issue using different approaches.
One approach involves estimating the reachable set of the NN-controlled system \cite{huang2019reachnn, julian2019reachability, 9561956}. However, existing reachability-based approaches are limited to discrete-time systems and finite time safety, while in the present work we are dealing with continuous-time systems and infinite-time safety properties. Another approach focuses on finding a barrier certificate \cite{zhao2020synthesizing, zhao2022verifying} for the closed-loop system. The main challenge lies in computing the barrier certificate. Recent advancements involve training a separate NN that acts as barrier certificates to the closed-loop system \cite{zhao2020synthesizing, zhao2022verifying}. This approach leverages the universal approximation capabilities of the NN, enabling them to estimate a barrier certificate if one exists. Nevertheless, a notable limitation of this method is its lack of completeness. In other words, if the process fails to identify a barrier certificate, it remains uncertain whether the failure stems from an inability to discover the correct certificate or the absence of a valid certificate altogether.
This paper presents a novel approach to ensure safety of linear dynamical systems controlled by a NN. The method revolves around Nagumo's condition, which states that a set is invariant if the vector field at every point on the boundary points back inside the set \cite{inbookcite}. By utilizing continuous PWA activation functions within the NN, like $ReLU$, the output of the latter can be explicitly expressed as a continuous composite PWA function. Nagumo's condition is then used for the invariance of linear systems controlled by PWA controllers within a polytopic set. Consequently, the verification of a small subset of boundary points is sufficient to prove the set's invariance. Although the number of regions within which the PWA controller is affine increases non-polynomially with the number of neurons \cite{montufar2014number}, the advantage of the present approach relies on the fact that the calculations are limited to the regions connected to the set's boundaries. To this end, an algorithm that leverages the \textit{automatic differentiation} \cite{baydin2018automatic} offered by modern deep-learning libraries like \textit{PyTorch} \cite{paszke2019pytorch} is described and analysed. 

Section \ref{sec:2} introduces the class of systems and NN considered in the paper. In Section \ref{sec:3}, the condition to guarantee the safety of the considered system is presented. Section \ref{sec:4} proposes an algorithmic procedure to apply the presented method. Finally, Section \ref{sec:5} focuses on the scalability of the proposed approach.

\textbf{Notation:} $\mathbb{R}$ and $\mathbb{R}^+_{0}$ are the set of reals and of non-negative reals, respectively. Given a set $\mathcal{O} \subset \mathbb{R}^n$, its border set is denoted $\partial\mathcal{O}$. For a matrix $M$, we denote $M_{i,j}$ the element in the $i^{th}$ row and $j^{th}$ column and $M_{i,*}$ the $i^{th}$ row vector. The matrix $diag(v)$ designates a diagonal matrix having the scalars $v_i \in \mathbb{R}$ on its diagonal. For $v, w \in \mathbb{R}^n$, we write $v \preccurlyeq w$ the element-wise inequality of two vectors such that $v_i \leq w_i, \forall i$. To represent the multiplication of multiple matrices, denoted as $M_1, M_2, \ldots, M_n$, we introduce the notation ${\overset{\curvearrowleft}\prod^{n}_{k = 1}} M_k$. In this notation, the left arrow signifies that the matrices are multiplied from right to left, starting with $M_n$ and ending with $M_1$, resulting in the final matrix product. Considering a NN with $L$ hidden layers and $N$ neurons per hidden layer, we denote $n_{in}$, $n_{out}$ and $n_{in}\times N^{(L)}\times n_{out}$ the network's input dimension, output dimension and architecture respectively. The number of neurons on layer $l$ will be written as $n^{(l)}$. We write $\binom{n}{k}$ the binomial coefficient $\frac{n!}{k!(n-k)!}$.
\section{Preliminaries and Problem Formulation}
\label{sec:2}
\subsection{Linear Dynamical Systems}
In this paper, we consider a linear system $\Sigma$  described by
\begin{align}
\dot{x} = Ax + Bu
\label{eq: sys}
\end{align}
where $x  \in \mathbb{R}^n$ is the state and $u \in \mathbb{R}^m$ is the control input.
\subsection{Feed-forward Neural Networks}
A feed-forward neural network (NN) consists of interconnected layers, with each layer containing individual neurons \cite{HORNIK1989359}. The connection weights between layer $l-1$ and layer $l$ are denoted as $W^{(l)}$, and layer $l$'s biases are represented as $b^{(l)}$. The output $z^{(l)}$ of layer $l$, of a NN with an input $x$, can be expressed as follows:
\begin{equation*}
    z^{(l)} = 
        \sigma^{(l)}(W^{(l)}z^{(l-1)} + b^{(l)}), \quad l \in \{1,2,..., L\}, \quad z^{(0)}=x,
\end{equation*}
with $z^{(l-1)}\in \mathbb{R}^{n^{(l-1)}}$, $W^{(l)} \in \mathbb{R}^{n^{(l)} \times n^{(l-1)}}$ and $b^{(l)} \in \mathbb{R}^{n^{(l)}}$ and the componentwise \textit{activation function} $\sigma^{(l)} : \mathbb{R}^{n^{(l)}} \rightarrow \mathbb{R}^{n^{(l)}}$ introduces a nonlinear behavior to the neuron's output.
The overall output of the NN is $\mathcal{N}(x; \theta) = z^{(L)}$, where $\theta$ are the parameters of the NN, i.e. the weights and biases. 
\subsection{Convex Polytopes}
This section recalls key concepts related to polytopes \cite{brondsted2012introduction}, highlighting their main geometric properties for this study.
\begin{defn}
    Let $C \in \mathbb{R}^{m\times n}$ and $d \in \mathbb{R}^m$. A closed convex polyhedral set is defined in its $\mathcal{H}$-representation as follows: $\mathcal{R} = \{x\in \mathbb{R}^n \mid Cx \leq d\}$.
\end{defn}
When the inequality is strict (i.e., $Cx < d$), the polyhedron is referred to as open. A polytope is a bounded polyhedron.
\begin{defn}
    Let $\mathcal{V} = \{v_1,\hdots,v_n\}$ be a finite set of points in $\mathbb{R}^d$. The convex hull of $\mathcal{V}$ defines a polytope as follows:
    \begin{align*}
        \mathcal{R}  &= \left\{\sum_{k = 1}^{n} \lambda_i v_i \mid \lambda_i \geq 0, \sum_{k = 1}^{n} \lambda_i = 1, v_i \in \mathcal{V}\right\}
    \end{align*}
    This is also known as the $\mathcal{V}$-representation of a polytope and $\mathcal{V}(\mathcal{R})$ is the set of vertices of the polytope $\mathcal{R}$.
\end{defn}
\begin{defn}
    Let $\mathcal{R} \subset \mathbb{R}^n$ be a convex polytope defined by $\mathcal{R} = \{x \in \mathbb{R}^n\mid Cx\leq d\}$ where $C\in \mathbb{R}^{m\times n}$ and $d\in \mathbb{R}^m$. Consider the hyperplane $\mathcal{H}_i = \{x\in \mathbb{R}^n\mid C_{i,*}x = d_i\}, i \in \{1, \hdots, m\}$.
    The face $\mathcal{F}_i$ of $\mathcal{R}$ is defined as $\mathcal{F}_i(\mathcal{R}) = \mathcal{H}_i \cap \mathcal{R}$ and $\mathcal{F}(\mathcal{R})$ will denote the set of all the faces of $\mathcal{R}$.
\end{defn}
\subsection{Neural Network controlled systems}
The linear system $\Sigma$ in (\ref{eq: sys}) is considered to be controlled by a NN, represented by the state-feedback function $u(x) = \mathcal{N}(x; \theta)$ \cite{HUNT19921083}. The NN receives the system's states and produces the corresponding control input for the system. This class of systems is referred to as NN-controlled systems and has been extensively explored recently \cite{huang2019reachnn,julian2019reachability,zhao2022verifying}. In this case, the closed-loop system is given by
\begin{align}
    \dot{x}(t) = f(x(t)) = Ax(t) + B\mathcal{N}(x(t);\theta)
    \label{eq: sys_NN}
\end{align}
\subsection{Problem formulation}
In this paper, we consider the following problem:
\begin{prob}
\label{prob}
    Consider the closed-loop NN-controlled system in (\ref{eq: sys_NN}). Let \(\mathcal{S}\) be a polytopic set of admissible states in \(\mathbb{R}^n\), and let $\mathcal{O}_1, \mathcal{O}_2, \ldots, \mathcal{O}_m \subset \mathcal{S}$ be $m$ open polytopic unsafe sets. Consider the safe set $\mathcal{X} := \mathcal{S}\setminus \bigcup_{i=1}^{m} \mathcal{O}_{i}$. The objective is to verify that for any trajectory $ x : \mathbb{R}_0^{+} \rightarrow \mathbb{R}^n$ satisfying $x(0) \in \mathcal{X}$, we have that $x(t) \in \mathcal{X}$, for all $t \in \mathbb{R}_0^+$.
\end{prob}
Intuitively, the objective is to prove the positive invariance of the potentially non-convex set $\mathcal{X}$ for the NN-controlled system in (\ref{eq: sys_NN}).

\section{Main Results}
\label{sec:3}
\subsection{Piece-wise affine Neural Networks}
Consider the \textit{activation function} $\sigma: \mathbb{R}\rightarrow\mathbb{R}$ to be PWA:
\begin{align}
    \sigma(x) &=
    \left\{\begin{matrix}
           c_1 \cdot x + d_1 & \text{if } x \leq m_1\\
            c_2 \cdot x + d_2 & \text{if } m_1 < x \leq m_2\\
            \vdots\\
            c_n \cdot x + d_n & \text{if } m_{n-1} < x\\
        \end{matrix}\right.
    \label{activation}
\end{align}
where $c_i, d_i, m_i \in \mathbb{R}$, $i \in \{1,\hdots,n\}$.
Commonly used functions that satisfy this condition are \textit{ReLU}, \textit{leaky-ReLU}, \textit{PReLU} or the \textit{Binary Step function}. Note that the approach presented is specifically tailored for NNs with PWA \textit{activation functions}. The exploration to accommodate nonlinear \textit{activation functions} will be a subject of future research. The \textit{activation function} $\sigma^{(l)} : \mathbb{R}^n \rightarrow \mathbb{R}^n$ of a layer $l$ with $n$ neurons can then be written as:
\begin{align*}
\sigma^{(l)}(x) = [\sigma(x_1), \hdots, \sigma(x_n)]^T = C_{\Phi(x)}^{(l)} x + d_{\Phi(x)}^{(l)}
 ,\forall x\in \mathbb{R}^n
\end{align*}
where $x = (x_1,x_2, \hdots ,x_n)^T\in \mathbb{R}^n$, $C_{\Phi(x)}^{(l)} \in \mathbb{R}^{n \times n}$ is a square diagonal matrix and $d_{\Phi(x)}^{(l)} \in \mathbb{R}^{n}$. 

Note that the elements of $C_{\Phi(x)}^{(l)}$ and $d_{\Phi(x)}^{(l)}$ are the first order coefficients of $\sigma$ for $x_1, x_2, ... , x_n$ respectively, i.e. $\forall i \in \{1, \hdots, n\}$,
\begin{align*}
    \exists j \text{ s.t. } m_{j-1} < x_i \leq m_j \text{ and}
    \begin{cases}
        C_{{\Phi(x)}^{i,i}}^{(l)} = c_j\\
        d_{{\Phi(x)}^{i}}^{(l)} = d_j
    \end{cases}
\end{align*}
The subscript $\Phi(x)$ will throughout this paper represent the index of the affine region of the NN when given an input $x$. Assume that $x \in [\underline{m},\overline{m}]$, i.e, $\underline{m}_i<x_i \leq \overline{m}_i$, $i=1,\ldots,n$, where $\underline{m}, \overline{m} \in \mathbb{R}^n$. We refer to $\underline{m}, \overline{m}$ as \textit{delimiters} of the vector $x$. These vectors essentially define the boundary points for each element in $x$ and when an individual element crosses these boundaries, $C_{\Phi(x)}^{(l)}$ and $d_{\Phi(x)}^{(l)}$ are altered accordingly.\\
Given an input $x$, the output of layer $l$ can be written as:
\begin{align}
    z^{(l)} &= \sigma^{(l)}(W^{(l)}z^{(l-1)} + b^{(l)}) \nonumber \\ &=(C_{\Phi(x)}^{(l)}W^{(l)})z^{(l-1)} + (C_{\Phi(x)}^{(l)}b^{(l)} + d_{\Phi(x)}^{(l)})
    \label{eq: cw}
\end{align}
Thus, we can write the output of layer $l$ w.r.t. the input $x$:
\begin{align}
    z^{(l)} &=  \mathcal{E}^{(l)}_{\Phi(x)}x+\mathcal{G}^{(l)}_{\Phi(x)}  \text{ , where} \label{activ_para}\\
    &\mathcal{E}^{(l)}_{\Phi(x)}={\overset{\curvearrowleft}\prod^{l}_{k = 1}} (C_{\Phi(x)}^{(k)}W^{(k)})\nonumber\\
    &\mathcal{G}^{(l)}_{\Phi(x)} = \sum_{k = 1}^{l} [{\overset{\curvearrowleft}\prod^{l}_{j=k+1}} C_{\Phi(x)}^{(j)}W^{(j)}](C_{\Phi(x)}^{(k)}b^{(k)} + d_{\Phi(x)}^{(k)})\nonumber
\end{align} 
We call $\mathcal{E}^{(l)}_{\Phi(x)}$ and $\mathcal{G}^{(l)}_{\Phi(x)}$ the \textit{active parameters} of $x$.
\begin{defn}
    Let $\mathcal{N}$ be a feed-forward NN, utilizing solely PWA \textit{activation functions}, defined by (\ref{activation}). Let $\mathcal{X}$ be the set of all possible inputs to $\mathcal{N}$. A \textit{linear region} $\mathcal{R}$ of the layer $(l)$ of $\mathcal{N}(x;\theta)$ is defined as the set:
    \begin{align*}
        \mathcal{R} = 
        \big\{x\in \mathcal{X} \mid
        \underbar{$m$} \preccurlyeq  W^{(l)}z^{(l-1)} + b^{(l)} \preccurlyeq \overline{m}, \\z^{(l-1)}=\mathcal{E}^{(l-1)}_{\Phi(x)} x+\mathcal{G}^{(l-1)}_{\Phi(x)} \big\}
    \end{align*}
    where $\underbar{$m$}$, $\overline{m}$ are the \textit{delimiters} of $W^{(l)}z^{(l-1)}+b^{(l)}$ in $\sigma^{(l)}$.
\end{defn}
It is proven that each \textit{linear region} is a convex polytope \cite{chu2018exact}. Intuitively, a \textit{linear region} is the convex set of states that share the same \textit{active parameters}. Moreover, if the PWA \textit{activation functions} are all continuous, the output will be continuous as well. The closed-loop system in (\ref{eq: sys_NN}) becomes:
\begin{align}
    \dot{x} = Ax + B \mathcal{N}(x; \theta) 
    = (A + B \mathcal{E}^{(L)}_{\Phi(x)}) x + B \mathcal{G}^{(L)}_{\Phi(x)}
    \label{eq: closed_loop}
\end{align}
Note that $\Phi(x) = j$ if $x \in \mathcal{R}_j$ where $R_j$ is a polytope corresponding to a \textit{linear region} and is defined as $\mathcal{R}_j=\{z \mid C_jz\leq d_j\}$. Hence the dynamic of the closed-loop system is continuous PWA and will change when the state crosses into another \textit{linear region}. 

\subsection{Invariance condition}
For the PWA dynamical system to remain inside a safe set according to Nagumo theorem also \cite[Theorem 4.7]{inbookcite}, the vector field has to point tangentially or inward at every point along the set's boundary.  Considering our problem statement, we can take advantage of the linear constraints of the polytopic \textit{linear regions} and the linearity of the closed-loop system within these regions to simplify this condition. Indeed, we show that it is sufficient to examine solely the vertices of the set we want to verify and the points on its border where a transition between \textit{linear regions} occurs. 
\begin{thm}
Consider a polytopic set $\mathcal{S} = \{x \in \mathbb{R}^n\mid C^{\mathcal{S}}x \leq d^{\mathcal{S}} \}$ and let $\mathcal{F}(\mathcal{S}) = \{\mathcal{F}_1, \hdots, \mathcal{F}_N\}$ the faces of $\mathcal{S}$. Consider the linear switched system $\Sigma$ in (\ref{eq: closed_loop}) defined by $\dot x = f(x) = A_{\Phi(x)}x + b_{\Phi(x)}$ where $\Phi(x) = j$ if $x \in \mathcal{R}_j = \{z \mid C_jz\leq d_j\}$, $1\leq j\leq M$. Furthermore, assume $\mathcal{S} \subseteq \bigcup_{j=1}^{M} \mathcal{R}_{j}$.  Consider the set
\begin{align}
    \mathcal{D}_{ij} = \mathcal{F}_i \cap \mathcal{R}_j , i \in \{1, \hdots, N\}, j \in \{1, \hdots, M\}
    \label{eq: B}
\end{align}
Then, for any trajectory $ x : \mathbb{R}_0^{+} \rightarrow \mathbb{R}^n$ satisfying $x(0) \in \mathcal{S}$, we have
$x(t) \in \mathcal{S}, \forall t \in \mathbb{R}_0^+$ for the system $\Sigma$ if and only if the following condition is satisfied:
\begin{align}
    &C^{\mathcal{S}}_{i,*} \cdot (A_j v + b_j) \leq 0, \nonumber \\ &\forall i \in \{1,\hdots,N\}, \forall j \in \{1,\hdots, M\}, \forall v \in \mathcal{V}(\mathcal{D}_{ij})
    \label{invariance_condition}
\end{align}
\label{thm: vertex}
\end{thm}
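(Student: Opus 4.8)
The plan is to use Nagumo's theorem to turn the infinite–horizon invariance of $\mathcal{S}$ into a pointwise subtangentiality condition on $\partial\mathcal{S}$, and then to collapse that condition onto finitely many points by exploiting (a) the polytopic description of $\mathcal{S}$, (b) the covering hypothesis $\mathcal{S}\subseteq\bigcup_{j}\mathcal{R}_j$, and (c) the fact that $f$ is affine on each $\mathcal{R}_j$.

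First I would record the geometry. Since $\mathcal{S}$ is a polytope, $\partial\mathcal{S}=\bigcup_{i=1}^{N}\mathcal{F}_i$, and for $x\in\partial\mathcal{S}$ the (Bouligand) tangent cone is $T_{\mathcal{S}}(x)=\{w\in\mathbb{R}^n\mid C^{\mathcal{S}}_{i,*}w\le 0 \text{ for every } i \text{ with } C^{\mathcal{S}}_{i,*}x=d^{\mathcal{S}}_i\}$; inactive constraints do not enter. Because the activation functions are continuous PWA, $f$ is continuous and in fact Lipschitz continuous, so Nagumo's theorem \cite[Theorem 4.7]{inbookcite} applies and yields: every trajectory with $x(0)\in\mathcal{S}$ stays in $\mathcal{S}$ for all $t\ge 0$ if and only if $f(x)\in T_{\mathcal{S}}(x)$ for all $x\in\partial\mathcal{S}$. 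Reading this face by face, and noting that $x\in\mathcal{F}_i$ is equivalent to ``constraint $i$ is active at $x$'', invariance of $\mathcal{S}$ is equivalent to: $C^{\mathcal{S}}_{i,*}f(x)\le 0$ for all $i\in\{1,\dots,N\}$ and all $x\in\mathcal{F}_i$.

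Next I would localise to the linear regions. From $\mathcal{F}_i\subseteq\mathcal{S}\subseteq\bigcup_{j=1}^{M}\mathcal{R}_j$ we get $\mathcal{F}_i=\bigcup_{j=1}^{M}\mathcal{D}_{ij}$ with $\mathcal{D}_{ij}=\mathcal{F}_i\cap\mathcal{R}_j$. Fix $i$ and $j$. On the closed region $\mathcal{R}_j$ one has $f(x)=A_jx+b_j$: this holds by definition on the interior of $\mathcal{R}_j$ and extends to all of $\mathcal{R}_j$ by continuity of $f$, so the ambiguity of $\Phi(x)$ on the overlaps $\mathcal{R}_j\cap\mathcal{R}_{j'}$ is harmless. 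Hence $x\mapsto C^{\mathcal{S}}_{i,*}f(x)=C^{\mathcal{S}}_{i,*}(A_jx+b_j)$ is an affine functional on $\mathcal{D}_{ij}$. Since $\mathcal{S}$ is bounded, $\mathcal{D}_{ij}$ is a (possibly empty) polytope, equal to the convex hull of its vertex set $\mathcal{V}(\mathcal{D}_{ij})$; an affine functional is $\le 0$ on a polytope if and only if it is $\le 0$ at each vertex — the forward direction is immediate, and the converse follows by writing an arbitrary point as a convex combination of vertices and using affinity. Consequently $C^{\mathcal{S}}_{i,*}f(x)\le 0$ for all $x\in\mathcal{F}_i$ is equivalent to $C^{\mathcal{S}}_{i,*}(A_jv+b_j)\le 0$ for all $j$ and all $v\in\mathcal{V}(\mathcal{D}_{ij})$. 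Quantifying over $i$ recovers exactly the condition (\ref{invariance_condition}), and since every step is an equivalence both implications of the theorem are obtained simultaneously.

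The step I expect to demand the most care is the passage from Nagumo's condition over the whole boundary $\partial\mathcal{S}$ to the finite family of facewise inequalities: one must argue cleanly that (i) only the active constraints shape $T_{\mathcal{S}}(x)$, (ii) every boundary point is seen by at least one face, and (iii) the continuity of $f$ genuinely removes the dependence on the choice of region index on the region overlaps, so that $C^{\mathcal{S}}_{i,*}f$ is truly affine on each $\mathcal{D}_{ij}$ and not merely piecewise affine. The remaining ingredient — that an affine functional over a polytope attains its maximum at a vertex — is standard.
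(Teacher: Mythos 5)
Your proposal is correct and follows essentially the same route as the paper's proof: Nagumo's theorem reduces invariance to subtangentiality on $\partial\mathcal{S}$, the covering hypothesis decomposes each face into the polytopes $\mathcal{D}_{ij}$, and the affinity of $f$ on each region together with the convex-combination-of-vertices argument collapses the condition to (\ref{invariance_condition}). Your additional remarks on the tangent-cone characterization via active constraints and on the well-definedness of $A_jx+b_j$ on region overlaps make explicit two points the paper leaves implicit, but they do not change the argument.
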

\begin{proof}
    Let us rewrite $\mathcal{S} = \{x \in \mathbb{R}^n\mid g^{\mathcal{S}}_i(x)\leq 0 , i \in \{1, \hdots, N\}\}$ where $g^{\mathcal{S}}_i:\mathbb{R}^n \rightarrow \mathbb{R}$ defined as $g^{\mathcal{S}}_i(x)=C^{\mathcal{S}}_{i,*}x - d_i$. Consider a point $x \in \partial \mathcal{S}$. Then there exists $i\in\{1,\hdots,N\}$, such that $x\in \mathcal{F}_i(\mathcal{S})$. Using the fact that $\mathcal{S} \subseteq \bigcup_{j=1}^{M} \mathcal{R}_{j}$ there exists $j \in \{1, \hdots, M\}$ such that $x \in \mathcal{D}_{ij}$. Hence, one gets 
    \begin{align*}
            x = \sum_{k = 1}^{l} \lambda_k v_k \text{ where } \lambda_k \in [0, 1], \sum_{k = 0}^{l} \lambda_k = 1, v_k \in \mathcal{V}(\mathcal{D}_{ij})
    \end{align*}
    where $l$ is the number of vertices of the polytope $\mathcal{D}_{ij}$. Thus,
    \begin{align*}
        &\nabla g^{\mathcal{S}}(x)^T f(x)\\
        &=(C^{\mathcal{S}}_{i,*})^T\cdot(A_j x + b_j)
        =(C^{\mathcal{S}}_{i,*})^T\cdot(A_j\sum_{k = 0}^{l} \lambda_k v_k + b_j)\\
        &= \sum_{k = 0}^{l} \underbrace{\lambda_{k}[(C^{\mathcal{S}}_{i,*})^T \cdot (A_jv_k + b_j)}_{\leq 0}]
        \leq 0
    \end{align*}
implying that $f(x) \in T_S(x)$\footnote{$T_{\mathcal{S}}(x)$ denotes the tangent cone to the set $\mathcal{S}$ at the point $x$ (see \cite[Definition 4.6]{inbookcite}).}. It follows from \cite[Theorem 4.7]{inbookcite} that $\mathcal{S}$ is invariant for the considered system. Now assume that $\mathcal{S}$ is  an invariant for the system $\Sigma$. We have from \cite[Theorem 4.7]{inbookcite} that $f(x) \in T_S(x)$ for all $x \in \partial S$. Now consider $v_{ij} \in \mathcal{D}_{ij}$, $i \in \{1,\ldots,N\}$, $j \in \{1,\ldots,M\}$. Since $v_{ij} \in \partial S$, it follows that $f(v_{ij}) \in T_S(v_{ij})$, which in turn implies (\ref{invariance_condition}). 
\end{proof}
\begin{figure}
    \centering
    \includegraphics[width = 0.28\textwidth]{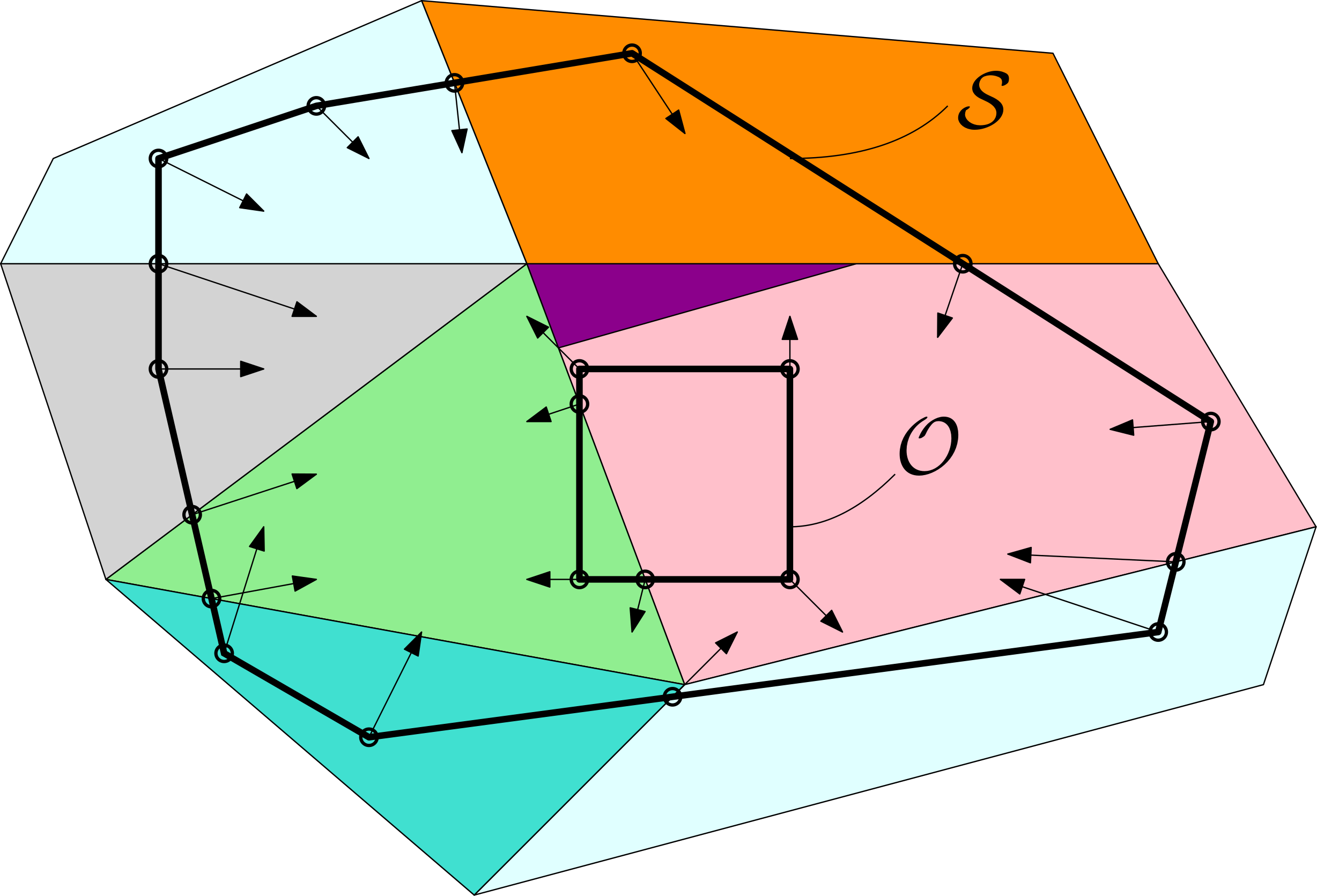}
    \caption{Every \textit{linear region} has a different color. A finite set of vertices must be verified to assess the invariance of $\mathcal{S}\setminus\mathcal{O}$.}
    \label{fig: corollary}
\end{figure}
For an open convex polytopic unsafe set $\mathcal{O} = \{x \in \mathbb{R}^n\mid C^{\mathcal{O}}x < d^{\mathcal{O}} \}$, the inequality of equation (\ref{invariance_condition}) changes, such that for any trajectory $ x : \mathbb{R}_0^{+} \rightarrow \mathbb{R}^n$ satisfying $x(0) \notin \mathcal{O}$, we have
$x(t) \notin \mathcal{O}, \forall t \in \mathbb{R}_0^+$ for the system $\Sigma$ if and only if the following condition is satisfied:
\begin{align}
    &C^{\mathcal{O}}_{i,*} \cdot (A_j v + b_j) \geq 0, \nonumber \\ &\forall i \in \{1,\hdots,N\}, \forall j \in \{1,\hdots, M\}, \forall v \in \mathcal{V}(\mathcal{D}_{ij})
    \label{invariance_condition_cor}
\end{align}
We call (\ref{invariance_condition}) and (\ref{invariance_condition_cor}) the \textit{invariance condition}. An intuitive 2-dimensional example is depicted in Figure \ref{fig: corollary}.

\section{Algorithmic Implementation}
\label{sec:4}
\subsection{Segmentation into Linear Regions}
The division of the state space into \textit{linear regions} comes from the \textit{breakpoints} in the PWA \textit{activation function} of each neuron. The hyperplane separating two regions can be written w.r.t the output of the previous layer $z^{(l-1)}$:
\begin{equation}
    \{z^{(l-1)}\in \mathbb{R}^{n^{(l-1)}} \mid W^{(l)}_{n,*} z^{(l-1)} + b^{(l)}_n = m_k\}
    \label{eq: hyperplane_z}
\end{equation}
where $W^{(l)}_{n,*}$ are the weights linked to neuron $n$ in layer $l$, $b^{(l)}_n$ its bias and $m_k$ the $k^{th}$ breakpoint of its \textit{activation function}.
In a multi-layer NN, every layer of the network will cut every region $\mathcal{R}_j$ coming from the previous layer with (\ref{eq: hyperplane_z}). As the \textit{linear regions} should be expressed w.r.t. the state space, we can use (\ref{activ_para}) to rewrite (\ref{eq: hyperplane_z}) as:
{\begin{align}
    &\mathcal{H}^{(l)}_{n,k}\{\mathcal{R}_j\}= \nonumber\\
    &\{x \in \mathbb{R}^{n_{in}} \mid W^{(l)}_{n,*}[\mathcal{E}_{j}^{(l-1)}x + \mathcal{G}_{j}^{(l-1)}] + b_n^{(l)} = m_k\}
    \label{eq: hyperplane}
\end{align}}
This hyperplane represents the border between two \textit{linear regions} of layer $(l)$ within region $\mathcal{R}_j$. Note that the hyperplane depends on the \textit{active parameters} of the \textit{linear region} it cuts. As a result, the hyperplane is only valid inside its associated \textit{linear region}. Equation (\ref{eq: hyperplane}) also shows that the computation of a region of a given layer is dependent on a region of the previous layer. Consequently, we compute all the \textit{linear regions} of a layer before moving to the next layer. 

The \textit{active parameters} can be computed using an iterative formula of (\ref{activation}). However, modern deep-learning libraries (e.g. \textit{PyTorch} \cite{paszke2019pytorch}) build a computation-graph using \textit{automatic differentiation} \cite{baydin2018automatic} making the computation of the gradient of any element in the network w.r.t. another possible and efficient. As the output of a layer inside a \textit{linear region} has by definition a constant derivative, the Jacobian matrix only needs to be evaluated for a single point inside the region. We take the Chebyshev center $x_c$, i.e. the center of the largest Euclidean ball that is entirely contained within the polytope, because it is by definition inside the region and its computation scales well for higher dimensions. We can express the \textit{active parameters} in a more simple way than (\ref{activ_para}):
\begin{align*}
    \mathcal{E}^{(l)}_{\Phi(x_c)} &= \left. \frac{\partial z^{(l)}}{\partial x}\right|_{x_c};
    \mathcal{G}^{(l)}_{\Phi(x_c)} &= \left. z^{(l)}\right|_{x_c} - \left. \frac{\partial z^{(l)}}{\partial x}\right|_{x_c} \cdot x_c
\end{align*}
Figure \ref{fig: div} depicts the segmentation algorithm. It takes as input the NN $\mathcal{N}$, a closed polytopic set $\mathcal{S} \subset \mathbb{R}^{n_{in}}$ and $\mathcal{O} = \{\mathcal{O}_1, \hdots, \mathcal{O}_m\} \subset \mathcal{S}$ a list of open polytopic sets and returns the list of \textit{linear regions} of the state-space.
Note that the segmentation algorithm will provide a unique decomposition of the set $\mathcal{S}\setminus\mathcal{O}$ under a given NN. Figure \ref{fig:tree-graph} shows how the segmentation can be seen as a tree-graph. If we consider $\mathcal{R}_4$ of no interest, we can ignore the computation of all its subregions and hence save considerable time. Thus, in our approach, a substantial portion of the computation time can be saved since we do not need to process \textit{linear regions} that do not intersect the border of the set to be analysed. An example would be the purple region in Figure \ref{fig: corollary}.
\begin{figure}
    \centering
    \includegraphics[width = 0.48\textwidth]{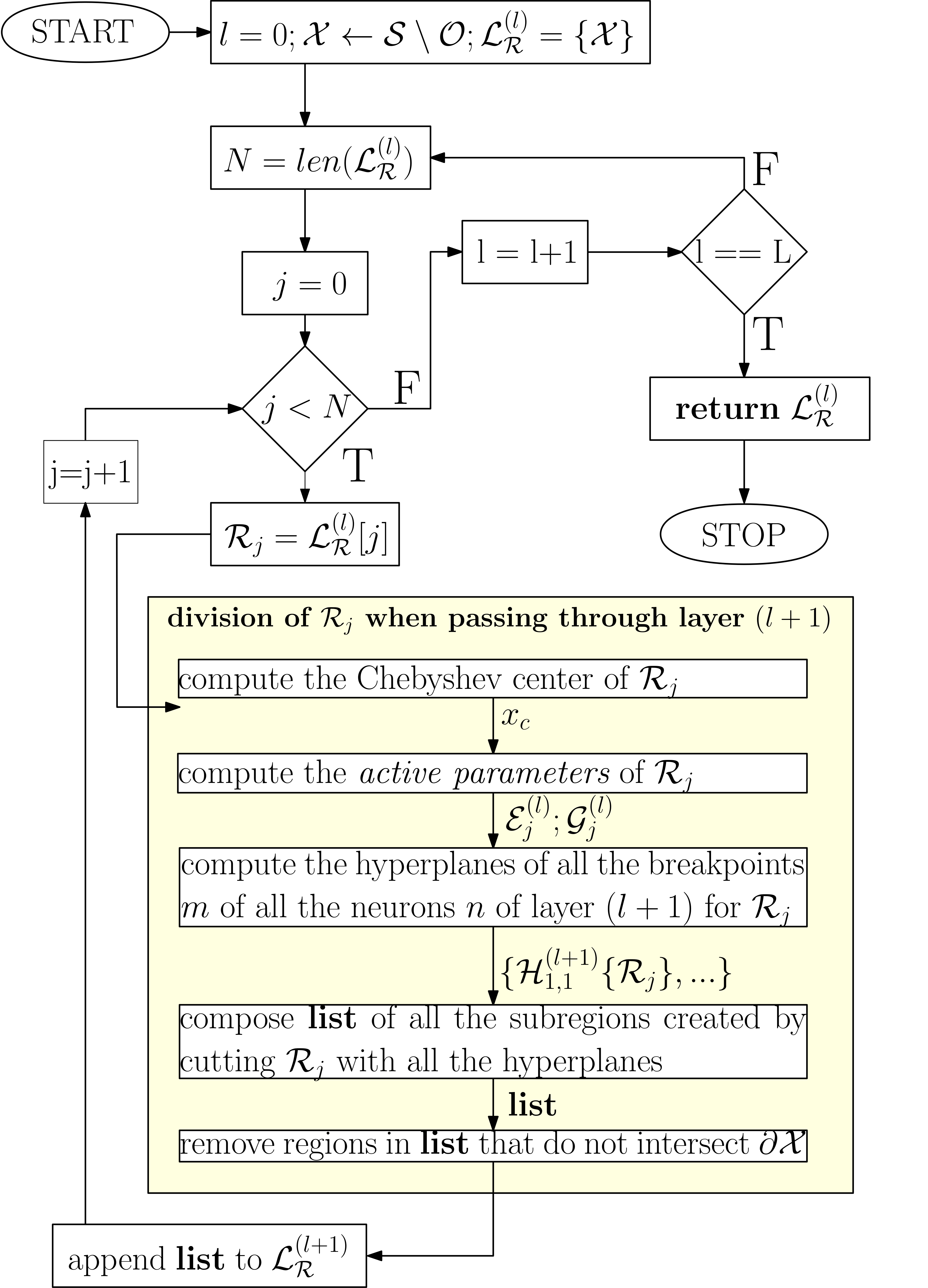}
    \caption{Segmentation algorithm: $\mathcal{L}_\mathcal{R}^{(l)}$ is the list of all the regions of layer $l$. The algorithm cuts every region of $\mathcal{L}_\mathcal{R}^{(l)}$ with their associated hyperplanes of layer $l+1$.}
    \label{fig: div}
\end{figure}
\begin{figure}
    \centering
    \includegraphics[width = 0.48\textwidth]{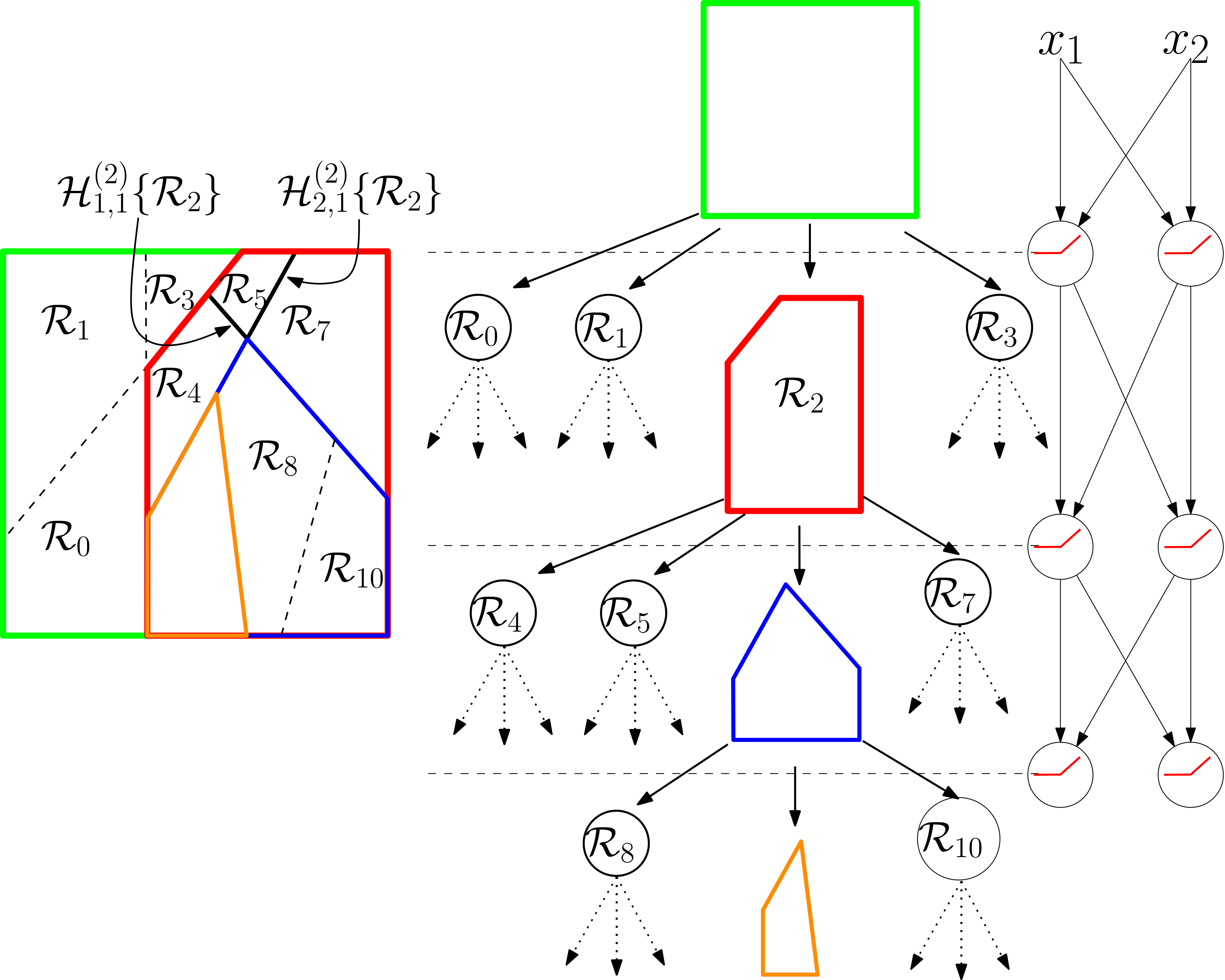}
    \caption{Segmentation of the green polytope in $\mathbb{R}^2$ by a NN having for every layer $l$, $\sigma^{(l)}(x) = max(0,x)$. When a region $\mathcal{R}_j$ of layer $(l)$ is being cut by the forthcoming layer $(l+1)$, we first compute the \textit{active parameters} $\mathcal{E}_{j}^{(l)}$ and $\mathcal{G}_{j}^{(l)}$  of the region. These are then used to compute all the hyperplanes $\mathcal{H}^{(l+1)}\{\mathcal{R}_j\}$ in layer $(l+1)$. Finally, these hyperplanes cut (only) the region $\mathcal{R}_j$ in subregions.}
    \label{fig:tree-graph}
\end{figure}

\subsection{Safety Verification}
Once we have divided the state-space into \textit{linear regions}, we use Algorithm \ref{algo: safety} to verify the safety of the closed-loop system. It assesses whether the vector field guides the system back inside the safe set for each vertex of the \textit{invariance condition}.
 \begin{algorithm}[h]
 \small
 \caption{Verifying the safety of a NN-controller}
 \begin{algorithmic}[1]
 \renewcommand{\algorithmicrequire}{\textbf{Input:}}
 \renewcommand{\algorithmicensure}{\textbf{Output:}}
 \REQUIRE neural network $\mathcal{N}$, closed polytopic set $\mathcal{S} \subset \mathbb{R}^{n_{in}}$, $\mathcal{O} = \{\mathcal{O}_1, \hdots, \mathcal{O}_n\} \subset \mathcal{S}$ a list of open polytopic sets and the closed-loop dynamical system $\Sigma$ in (\ref{eq: sys_NN}).
\ENSURE a boolean \texttt{safety}
\\ \textit{Initialisation} : $\texttt{safety} \leftarrow \texttt{TRUE}$ 
\STATE $\mathcal{L}_\mathcal{R} \leftarrow \texttt{segmentation\_algorithm}(\mathcal{N}, \mathcal{S}, \mathcal{O})$
\FOR {\textbf{each polytope }$\mathcal{R}$ \textbf{in} $\mathcal{L}_\mathcal{R}$}
    \FOR {\textbf{each face }$\mathcal{F}_i$ \textbf{in} $\mathcal{F}(\mathcal{S})$}
    \STATE $\mathcal{P} = \mathcal{R} \cap \mathcal{F}_i$
    \IF{$\mathcal{P} \neq \emptyset$}
        \FOR {\textbf{each vertex} $v$ in $\mathcal{V}(\mathcal{P})$}
                    \IF{$C^{\mathcal{S}}_{i,*}f(v) > 0$}
                         \STATE $\texttt{safety} \leftarrow \texttt{FALSE}$
                    \ENDIF
        \ENDFOR
    \ENDIF
    \ENDFOR
\FOR {\textbf{each polytope }$\mathcal{O}_k$ \textbf{in} $\mathcal{O}$}
    \FOR {\textbf{each face }$\mathcal{F}_i$ \textbf{in} $\mathcal{F}(\partial\mathcal{O}_k)$}
    \STATE $\mathcal{P} = \mathcal{R} \cap \mathcal{F}_i$
    \IF{$\mathcal{P} \neq \emptyset$}
        \FOR {\textbf{each vertex} $v$ in $\mathcal{V}(\mathcal{P})$}
                    \IF{$C^{\mathcal{O}}_{i,*}f(v) < 0$}
                         \STATE $\texttt{safety} \leftarrow \texttt{FALSE}$
                    \ENDIF
        \ENDFOR
    \ENDIF
    \ENDFOR
\ENDFOR
\ENDFOR
\RETURN \texttt{safety}
\end{algorithmic}
\label{algo: safety}
\end{algorithm}

\section{Numerical Examples}
\label{sec:5}
\subsection{Mobile robot}
We consider a 2-dimensional environment in which an agent navigates. The agent behaves as an integrator and is represented by: $\dot{x}_{2\times1} = u_{2\times1}$.  The safety of the controlled system will be verified for $\mathcal{S}_1 = \{(x,y)\mid -5\leq x\leq 5, -5\leq y \leq 5\}$ and two polytopic unsafe sets $\mathcal{O}_1$ and $\mathcal{O}_2$. Figure {\ref{fig: visu}} depicts the environment and the \textit{linear regions} identified by the segmentation algorithm illustrated in Figure \ref{fig: div}. The NN is trained using Reinforcement Learning, more specifically DDPG \cite{lillicrap2015continuous}, to reach a target position while avoiding obstacles.
The architecture of the NN controller is $2\times128^{(2)}\times 2$. The activation function utilized in the hidden layers is the $\textit{leaky-relu}(x) = \max(0.01x, x)$ and on the last layer we use a linearized version of $tanh(x)$.
\begin{figure}[h]
     \centering
     \begin{subfigure}[b]{0.23\textwidth}
         \centering
         \includegraphics[width=\textwidth]{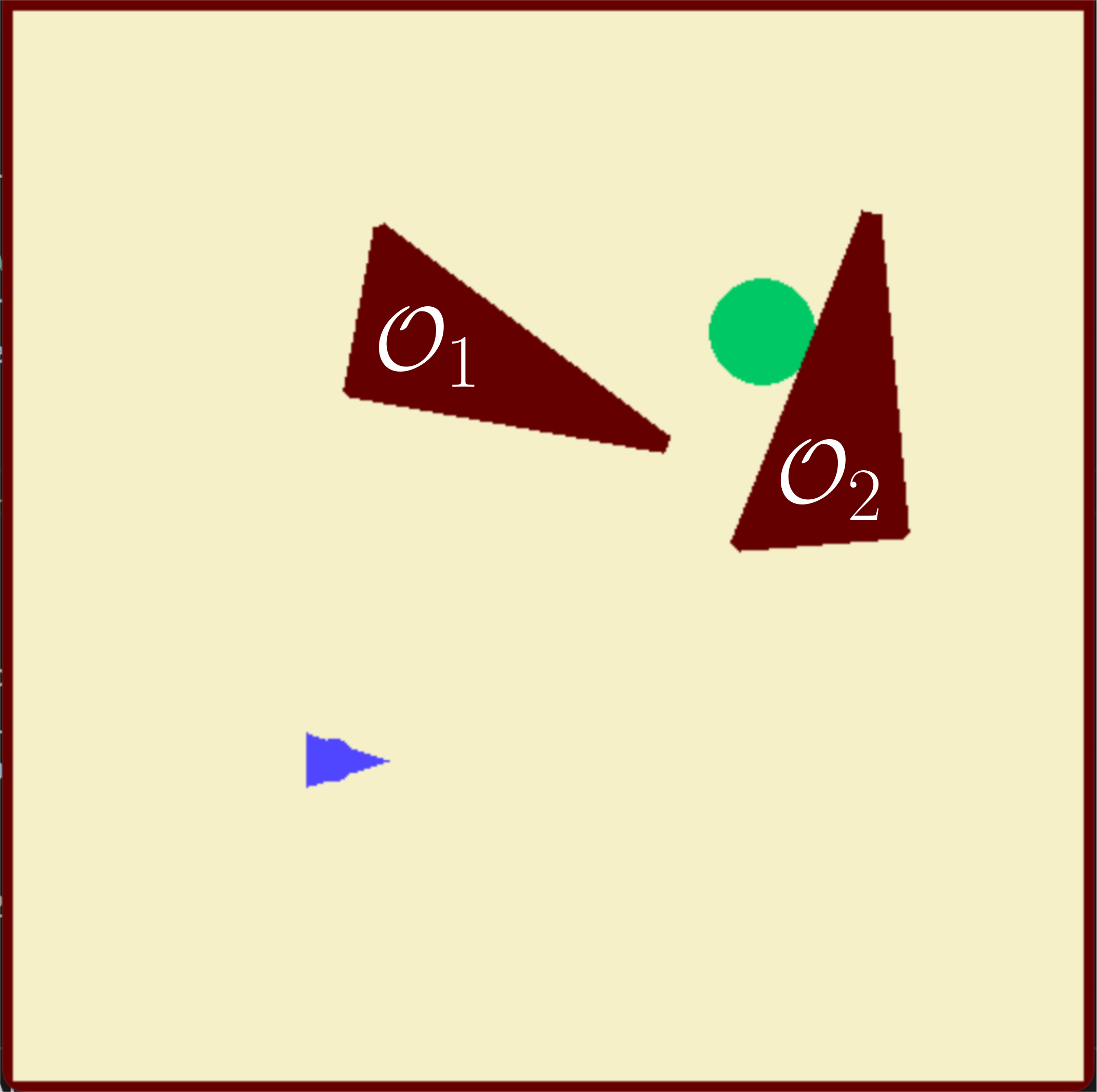}
         \caption{2D environment with the agent in blue and the unsafe polytopic sets $\mathcal{O}_1$ and $\mathcal{O}_2$ in red}
         \label{fig:environment}
     \end{subfigure}
     \hspace{0.2cm}
     \begin{subfigure}[b]{0.23\textwidth}
         \centering
         \includegraphics[width=\textwidth]{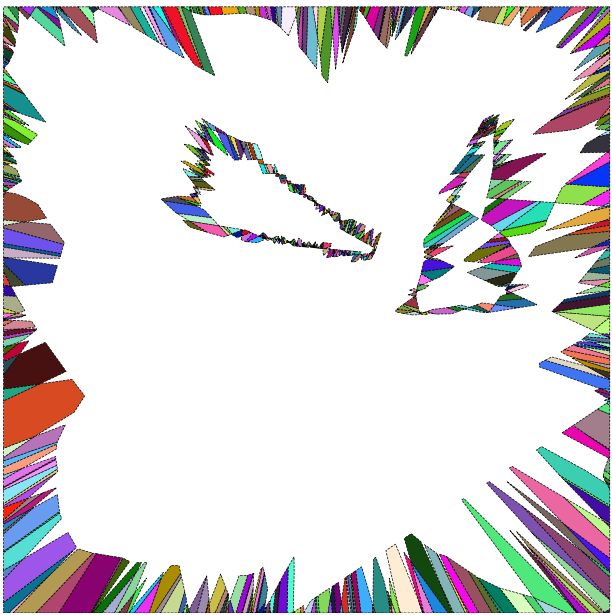}
         \caption{Segmentation of the state space into 1002 regions by a $2\times 128^{(2)}\times2$ NN controller}
         \label{fig: subregions}
     \end{subfigure}
     \caption{Visualization of the \textit{Mobile Robot}}
     \label{fig: visu}
\end{figure}
\subsection{Spring-mass-damper}
The system comprises $n$ interconnected wagons, with each wagon having two variables: position and velocity. Figure \ref{fig:mass_spring_damper} depicts the system. The system's representation is as follows:
\begin{align*}
    \dot{x} = 
    \begin{bmatrix}
        [0]_{n\times n} & I_{n}\\
        -M^{-1}K & M^{-1}C\\
    \end{bmatrix}
    x + 
    \begin{bmatrix}
        [0]_{n\times n}\\
        M^{-1} \\
    \end{bmatrix} [F_{u_1}, \hdots, F_{u_n}]^T
\end{align*}
where
\begin{align*}
    K = \left [
    \begin{smallmatrix}
        (k_1 + k_2) & -k_2 & 0 & 0& \hdots & 0\\
        -k_2 & (k_2+k_3) & -k_3 & 0 & \hdots & 0\\
        0 & -k_3 &  (k_3 + k_4) & -k_4 & \hdots & 0\\
        \vdots &\ddots &\ddots &\ddots &\ddots & \vdots\\
        0 & \hdots & 0& -k_{(n-1)} & (k_{(n-1)} + k_{n}) & -k_n\\
        0 & \hdots & \hdots  & 0& - k_n & k_n\\
    \end{smallmatrix}\right ],
\end{align*}
$M = diag[m_1, m_2, \hdots, m_n]$, $x = [z_1, \hdots, z_n, \dot z_1, \hdots, \dot z_n]$, where $z_i$ is the position of wagon $i$, $i=1,\ldots,n$ and $C$ has the same structure as $K$. The system is well-suited for the investigation of the scalability of our approach w.r.t. the dimensionality of the systems. Algorithm \ref{algo: safety} is applied to the set $\mathcal{S}_2 = \{ x\in \mathbb{R}^{2n}\mid 0 \leq x_i \leq 1, |x_{n+i}|\leq |x_i|,  i = 1, 2, \hdots, n\}$.
We train the NN by approximating an MPC controller. During training, the initial state is chosen from $\{ x\in \mathbb{R}^{2n}\mid |x_i| <1,  i = 1, \hdots,n\}$ and we impose $-1\leq u_i\leq 1, i= 1,\hdots, n$.
\begin{figure}
    \centering
    \includegraphics[width = 0.48\textwidth]{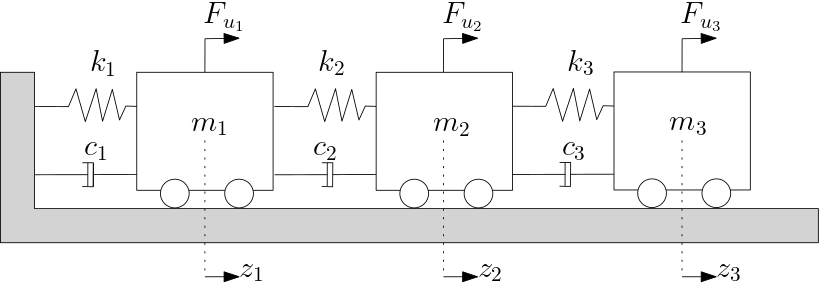}
    \caption{\textit{Spring-mass-damper} with 3 wagons, i.e. 6 states}
    \label{fig:mass_spring_damper}
\end{figure}
\subsection{Scalability}

The investigation of the time complexity relies on estimating the count of \textit{linear regions} in a multi-layer, multi-breakpoint NN, which unfortunately remains an open research question \cite{montufar2014number, hanin2019complexity}. Nevertheless, we aim to provide a clear idea of how computational time scales.

Considering the segmentation algorithm, the worst-case time complexity for partitioning a region by a layer is directly related to the maximum number of subregions that a set of $k$ hyperplanes can partition a $d$-dimensional space. We can define $K(n^{(l)};d)$ as the time complexity associated with segmenting a region $\mathcal{R}$ with dimensionality $d$ by layer $l$:
\begin{align*}
    K(n^{(l)};d) = 
    O\left(1 + n^{(l)} + \binom{n^{(l)}}{1} + \hdots + \binom{n^{(l)}}{d}\right)
\end{align*}
Let $\#\mathcal{R}(l)$ represent the count of regions in layer $l$ that intersect the border of the verification set. Consequently, the time complexity of the segmentation algorithm becomes:

\footnotesize
\begin{align*}
    \centering
    O\left(K(n^{(1)};d) + K(n^{(2)};d) \#\mathcal{R}(1)  + \hdots + K(n^{(L)};d) \#\mathcal{R}(L-1)\right)
\end{align*}
\normalsize

The time complexity of Algorithm 1 is contingent upon the count of \textit{linear regions} identified and their number of vertices. The time complexity of the vertex enumeration problem amounts to $O(f^2dv)$, where $f$ denotes the number of faces of the region and $v$ signifies the number of vertices. Importantly, the number of vertices exhibits exponential growth with the dimensionality $d$. The time complexity of verifying all the vertices becomes: $O\left(\#\mathcal{R}(L) \cdot 2^d\right)$.

Due to the absence of a formal estimation concerning the count of \textit{linear regions}, we conduct empirical tests. We investigate the effects of varying the depth and width of the NN and the dimension of the system. The data presented in Table \ref{table: width}, \ref{table: depth}, and \ref{table: mpc} includes essential metrics: $\#N$, $\#\theta$, $\#\mathcal{R}$, and $t_v$ representing the number of hidden neurons, the number of network parameters, the number of \textit{linear regions} and the verification time of Algorithm \ref{algo: safety}, respectively. Table \ref{table: width} and \ref{table: depth} provide results from experiments conducted on the \textit{Mobile Robot}, while Table \ref{table: mpc} from the \textit{Spring-mass-damper} system.
\begin{threeparttable}[htb]
    \footnotesize
    \setlength\tabcolsep{0pt}
\begin{tabular*}{\linewidth}{@{\extracolsep{\fill}} l cc cc cc @{}}
&   \multicolumn{4}{c}{} \\
Architecture           & $\#N$
                            & $\#\theta$
                                & $\#\mathcal{R}$
                                    & $t_v$ [s] \\ 
        \midrule
$2\times 16^{(2)} \times 2$  &32   &354&       125  & 28  \\
$2\times 32^{(2)} \times 2$  &64   &1218&      248  & 92  \\
$2\times 64^{(2)} \times 2$  &128  &4482&      477 & 307  \\
$2\times 128^{(2)}\times 2$  &256  &17154&     973 &  1263 \\
$2\times 256^{(2)} \times 2$ &512  &67074&     1835 & 4541  \\
\end{tabular*}
\caption{Scalability w.r.t. the NN's width}
\label{table: width}
\end{threeparttable}
\begin{threeparttable}[htb]
    \footnotesize
    \setlength\tabcolsep{0pt}
\begin{tabular*}{\linewidth}{@{\extracolsep{\fill}} l cc cc cc @{}}
&   \multicolumn{4}{c}{} \\
 Architecture           & $\#N$
                            & $\#\theta$
                                & $\#\mathcal{R}$
                                    & $t_v$ [s] \\ 
        \midrule
$2\times 32^{(1)} \times 2$  & 32  &162& 122    &   31\\
$2\times 32^{(2)} \times 2$  & 64  &1218& 248    & 92  \\
$2\times 32^{(4)} \times 2$  & 128 &3330& 412 &  263 \\
$2\times 32^{(6)} \times 2$  & 192 &5442& 597 &  566 \\
$2\times 32^{(8)} \times 2$  & 256 &7554& 862   &1185   \\
\end{tabular*}
\caption{Scalability w.r.t. the NN's depth}
\label{table: depth}
\end{threeparttable}
\begin{threeparttable}[htb]
    \footnotesize
    \setlength\tabcolsep{0pt}
\begin{tabular*}{\linewidth}{@{\extracolsep{\fill}} l cc cc cc @{}}
&   \multicolumn{4}{c}{} \\
Architecture          & $\#N$
                            & $\#\theta$
                                & $\#\mathcal{R}$
                                    & $t_v$ [s] \\ 
        \midrule
$2\times 8^{(2)} \times 1$     & 16   &105     &37      & 6  \\
$4\times 8^{(2)} \times 2$     & 16   &130     &253     & 62  \\
$6\times 8^{(2)} \times 3$     & 16   &155     &1937    & 384  \\
$8\times 8^{(2)} \times 4$     & 16   &180     &8120       & 1629  \\
\end{tabular*}
\caption{Scalability w.r.t. the system's dimension}
\label{table: mpc}
\end{threeparttable}
When comparing, it is important to clarify that our approach is currently tailored to linear systems, while existing barrier-based approaches \cite{zhao2020synthesizing, zhao2022verifying} can extend to nonlinear systems. However, their numerical examples are limited to systems of dimension 3.
Our approach exhibits favorable scalability when it comes to network size and we demonstrate that although our method is limited to linear systems, we manage to treat systems with dimensionality up to 8 in a reasonable time on an ordinary machine\footnote{All experiments were conducted on a 2016 MacBook Pro with 16 GB RAM, a 3.3 GHz Intel Core i7 dual-core processor. All the processing was done on the CPU. The codebase (Python) is available at \url{https://github.com/LouisJouret/Neural-Control-Invariance-Checker}.}. Our approach provides finite-time safety assessment, a capability lacking in existing barrier-based methods \cite{zhao2020synthesizing, zhao2022verifying}. Furthermore, if the computational time of these methods becomes excessive, they may require a trial-and-error procedure consisting of successive attempts with different NN architectures until potentially a successful NN-barrier-function is found. Conversely, our approach scales in a more predictable manner with the number of neurons and the dimensionality of the system.

\section{Conclusion and Future Work}

This paper presents a new method for evaluating the safety of linear dynamical systems controlled by a NN within a potentially non-convex region of the state space. Our algorithm is complete and exhibits excellent scalability properties regarding the number of neurons and the system's dimensionality. Our future research will concentrate on extending this approach to nonlinear systems. This will entail approximating the nonlinear system using a NN with PWA activation functions, effectively converting the closed-loop system into a linear representation.

\printbibliography{}

\end{document}